\documentclass[conference]{IEEEtran}
\usepackage{psfrag}
\usepackage{graphicx}
\usepackage{fancyhdr}

\sloppy

\usepackage{amsmath,amsthm}
\usepackage{amssymb}
\usepackage[all]{xy}
\usepackage{color}

\usepackage{graphics}
\usepackage{subfigure}
\renewcommand{\pi}{\mu}

\usepackage{cite}
\newcommand{\K}{\{1,\ldots, K\}}

 \newcommand{\SH}{\textnormal{Soft}}
 \newcommand{\Full}{\textnormal{Full}}
 
\newcommand{\M}{{M}}

\allowdisplaybreaks[4]

        % integers
        % rationals
        % integers
        % reals
        % complexs
          % product of n copies of reals

            % State space
        % probability
        % expectation
\def\1{{\mathbf 1}}        % indicator

    % borel Sigmal field
        % some Sigma field
        % some Sigma field
        % some Sigma field

\newtheorem{corollary}{Corollary}
\newtheorem{theorem}{Theorem}
\newtheorem{proposition}{Proposition}

\theoremstyle{remark}

\allowdisplaybreaks

%%Cadres d'algorithmes
%\newlength{\minipagewidth}
%\setlength{\minipagewidth}{\textwidth}
%\setlength{\fboxsep}{3mm}
%\addtolength{\minipagewidth}{-\fboxrule}
%\addtolength{\minipagewidth}{-\fboxrule}
%\addtolength{\minipagewidth}{-\fboxsep}
%\addtolength{\minipagewidth}{-\fboxsep}§
%\newcommand{\bookbox}[1]{
%\par\medskip\noindent
%\framebox[\textwidth]{
%\begin{minipage}{\minipagewidth}
%{#1}
%\end{minipage} } \par\medskip }
%%

\newcommand{\rt}[1]{{\color{black}{#1}}}
\newcommand{\mw}[1]{{\color{black}{#1}}}

\IEEEoverridecommandlockouts
\allowdisplaybreaks[3]

\begin{document}

\title{Complete Interference Mitigation Through Receiver-Caching in Wyner's Networks}

\author{ Mich\`{e}le Wigger$^{\dag}$, Roy Timo$^\ddag$, and Shlomo Shamai$^\dag\ddag$, \\
\small
$^{\dag}$LTCI CNRS, Telecom ParisTech, Universit\'e Paris-Saclay, 75013 Paris, France,
 michele.wigger@telecom-paristech.fr\\
 $^\ddag$Technische Universit\"{a}t M\"{u}nchen, roy.timo@tum.de, \qquad $^\dag \ddag $ Technion--Israel Institute of Technology, sshlomo@ee.technion.ac.il 

\thanks{
This work was supported by the Alexander von Humbold Foundation.}
}
\maketitle
\begin{abstract}
We present upper and lower bounds on the per-user multiplexing gain (MG) of Wyner's circular soft-handoff model and Wyner's circular  full model with cognitive transmitters and receivers with cache memories. The bounds are tight for cache memories with prelog \mw{$\mu\geq \frac{2}{3} D$} in the soft-handoff model and \mw{for $\mu\geq D$} in the full model, where $D$ denotes the number of possibly demanded files. In these cases the per-user MG of the two models is $1+ \frac{\mu}{D}$, the same as for non-interfering point-to-point links with caches at the receivers. Large receiver cache-memories thus allow to completely mitigate interference in these networks.
\end{abstract}

\section{Introduction}
We consider \rt{a} downlink \rt{communications problem} in cellular networks, where \rt{each} basestation (BS) BS\rt{is} connected to a \emph{cloud radio-access network (C-RAN)}. Such C-RANs \rt{can} facilitate cooperative interference cancellation at the \rt{BSs}, \rt{leading to} higher data rates and \rt{improved} energy efficiency~\cite{daiyu-2016}. With classical C-RAN architectures, \rt{however}, interference cancellation requires that either the backhaul links connecting the BSs with the C-RAN \rt{have large} capacity \cite{lapidothlevyshamaiwigger-2014-1} or that the C-RAN can store \rt{every} codebook and apply \rt{sophisticated} signal processing techniques~\cite{wiggertimoshamai16-1, ntranosmaddahalicaire-2015}. Both approaches are not feasible in practical systems. 

\rt{Researchers have also} proposed to replace the C-RAN with a  more complex \emph{fog radio-access network (F-RAN)}  \cite{pengyanzhangwang-2015}, which can \rt{perform more sophisticated} signal processing \mw{and be equipped with  \emph{caches}, where information  can be \rt{pre-stored}}. Improvements in rates of such F-RAN architectures are presented, \rt{for example}, in 
\cite{parksimeoneshamai-2016}.

In this work, we consider the classical C-RAN architecture %where the backhaul links are of limited capacities \rt{and the} C-RAN cannot perform complex signal processing or store \rt{every codebook}. We analyze the rates that are achievable when 
\mw{and assume that the receiving mobiles have caches.  Noisy communication networks with receiver caching have been considered in\cite{timowigger-2015-1,hassanzadeherkipllorcatulino-2015,ghorbelkobayashiyang-2015,zhangelia-2015}.  Previous works on C-RANs or F-RANs assigned caches to the BSs, e.g., \cite{ugurawansezgin-2015,azarisimeonespagnolinitulino-2016, maddahaliniesen-2015-1,pooyaabolfazlhossein-2015}. }
Notice that communicating cache contents from a central server to the mobiles is hardly more demanding than to the BSs when the communication happens during periods of weak network congestions, e.g., over night. This is possible when file popularities vary only slowly in time, which we will assume in this paper.

We model the communication from the base-stations to the mobiles by  \emph{Wyner's  circular soft-handoff network} and \emph{Wyner's circular full network}. %, which models a situation where mobiles lie in the intersection of two cells.
We \rt{will} show that when the receiving mobiles have sufficiently large cache memories, then  in the high signal-to-noise ratio (SNR) regime a \rt{combination} of
\begin{itemize} 
\item coded caching~\cite{maddahaliniesen-2014-1},
\item  interference-cancelation at the receivers that is facilitated by the cache contents, and 
\item broadcast transmission at the transmitters,
\end{itemize} 
allows \rt{one} to completely mitigate \rt{interference}. We present schemes that achieve the same \emph{full} multiplexing gain (MG) \rt{as interference-free} point-to-point links with caches at the receivers. 
For our scheme it suffices that each BS downloads only the messages intended to \mw{closeby mobiles}. The links from the C-RAN to the BSs can thus be of low rate, and  there is no need for sophisticated signal processing or  storing  codebooks at the C-RAN. Equipping mobile terminals in cellular networks with cache memories thus allows to substantially decrease backhaul costs and signal processing in C-RANs in the regime of high data rates  when almost all interference needs to be mitigated. 

\mw{The work most related to ours is~\cite{naderializadehmaddah-aliavestimehr-2016}, which also considers the MG of an interference network with receiver caching. In contrast to the paper here, in \cite{naderializadehmaddah-aliavestimehr-2016} the BSs have to decide already in advance which messages they  wish to download. The amount of downloaded data thus increases with the library size and is much larger than in our case.}

\section{Problem Description}
\label{sec:DescriptionOfTheProblem}

%%%%%%%%%%%%%%%%%%%%%%%%%%%%%%%%%
%%%%%%%%%%%%%%%%%%%%%%%%%%%%%%%%%%%%%%%%%
%%%%%%%%%%%%%%%%%%%%%%%%%%%%%%%%%%%%%%%%%

%We consider a cache-aided communication scenario with $K$ transmitter/receiver pairs:~\ref{fig:delivery} and \ref{fig:delivery_sym}.  We will 
\begin{figure*}[htb!]
\centering
\includegraphics[width=0.75\textwidth]{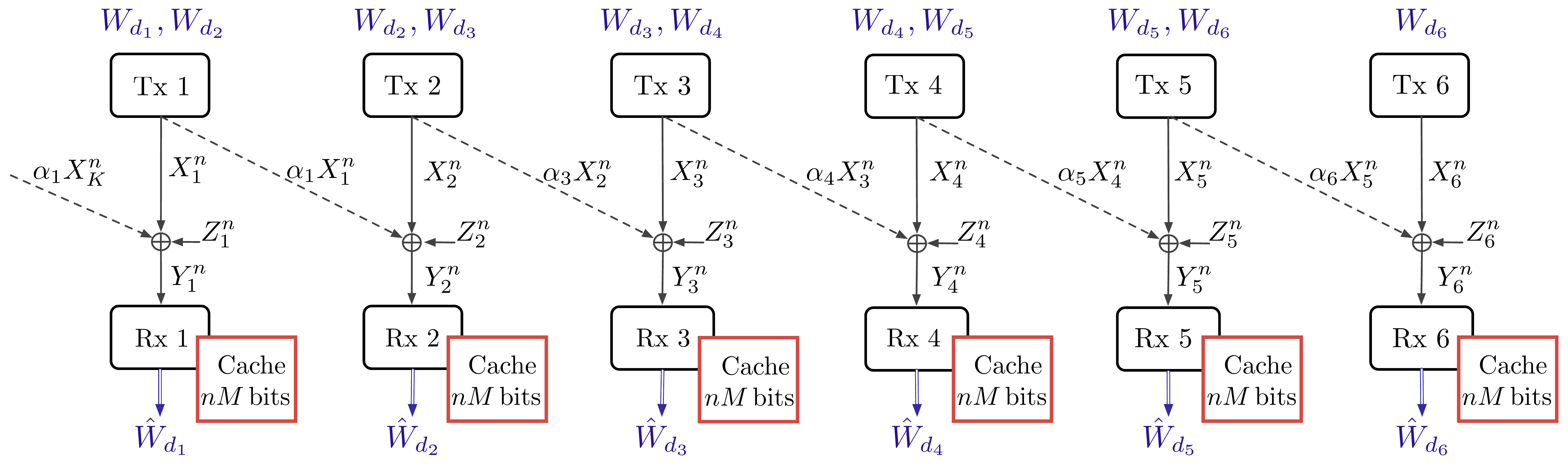}
\caption{Wyner's circular soft-handoff network for $K=6$. Each receiver has a cache of $n\M$ bits. During the delivery phase each transmitter knows the messages demanded by the two receivers that observe its signal.}
\label{fig:delivery}
\vspace{-3mm}
\end{figure*}
We consider two models for cellular communication with $K$ transmitter/receiver pairs:

\textit{Wyner's Circular  Soft-Handoff Model (fig.~\ref{fig:delivery}):}
The signal sent at transmitter (Tx)~$k$ is observed at receiver (Rx)~$k$ and Rx~$k+1$. Thus, if at time $t$, Tx~$k$ and Tx~$k-1$ send the real-valued symbols $X_{k,t}$ and $X_{k-1,t}$, then Rx~$k$ observes
\begin{equation}
\label{channel}
Y_{k,t}= X_{k,t} + \alpha_k X_{k-1,t} +Z_{k,t},
\end{equation}
where $\{Z_{k,t}\}$ is a sequence of  independent and identically distributed (i.i.d.) real standard Gaussians;  $\alpha_k\neq 0$ is a given real number; and 
%\begin{equation}\label{eq:circular}
$X_{0,t}= X_{K,t}.$
%\end{equation} 

\textit{Wyner's Circular  Full Model:}
In this model, the signal sent at Tx~$k$ is also observed at Rx~$k-1$. Thus, at time~$t$, 
\begin{equation}
\label{channel_sym}
Y_{k,t}= X_{k,t} + \alpha X_{k-1,t} + \alpha X_{k+1,t} +Z_{k,t},
\end{equation}
where for simplicity we assume equal cross-gain $\alpha\neq 0$ for all receivers and $X_{K+1,t}=X_{1,t}$.

In both models, the inputs have to satisfy a symmetric average block-power constraint $P>0$: 
\begin{equation}\label{eq:power}
\frac{1}{n} \sum_{t=1}^n X_{k,t}^2 
\leq P, \quad \forall k \in \K,\quad \textnormal{almost surely}.
\end{equation}

\rt{We consider a} library of \rt{$D\geq 6$} messages:
\begin{equation}\label{eq:library}
\textnormal{Library:} \hspace{1cm}W_{1}, W_2, \ldots, W_{D},
\end{equation}
\rt{which are} independent and uniformly distributed over the set $\mathcal{W}\triangleq\{1,\ldots, \lfloor 2^{n R}\rfloor\}$.
Each receiver requests one of the files from the library. \rt{These requests, however, are not known prior to communications.}% receivers' requests are however \emph{a priori} unknown to the transmitters. 

Finally, each receiver has a  separate dedicated storage space, a \emph{cache}, of size $n{M}$ bits. A central server, who has access to the entire library \eqref{eq:library}, transmits dedicated caching information to each receiver, which this latter stores in its cache memory.  This transmission takes place during an idle network period before the actual communication starts. As a consequence the transmission is  error-free, but the transmitted information cannot depend on the receivers' actual  demands, which are still unknown at this moment.

We describe the placement of the cache information and the general communication in more detail. There are three phases: 
%\begin{itemize}
%\item

{\textit{Phase~1 --- Caching:}} A central server stores an arbitrary function of the entire library in this Rx~$k$'s cache, for $k\in\{1,\ldots, K\}$. \rt{Let} $\mathbb{V}_k$ \rt{denote} the content in Rx~$k$'s cache,
\begin{equation}
\mathbb{V}_k = \phi_k(W_1, \ldots, W_D), 
\end{equation}
for \rt{some} $\phi_k\colon \mathcal{W}^{D} \to \{1,\ldots, \lfloor 2^{n{M}}\rfloor\}.$
%\item

{\textit{Phase~2 --- Download from Server:}} Each Rx $k\in\{1,\ldots, K\}$ communicates its demand $d_k\in\{1,\ldots, D\}$ to its adjacent\footnote{Tx~$j$ is adjacent to Rx~$k$ if the transmit signal of Tx~$j$ is observed at Rx~$k$.} transmitters, which then download the desired message $W_{d_k}$ from the central server. %For the soft-handoff model these are Tx~$k-1$ and Tx~$k$ are adjacent to Rx~$k$, and for the full model, Tx~$k-1$, Tx~$k$, and Tx~$k+1$ are adjacent to Rx~$k$. 
So,  after the server download phase, in the soft-handoff model: 
\begin{equation}\label{eq:server_download}
\textnormal{Tx}~k \textnormal{ knows: } \quad (W_{d_k}, \ W_{d_{k+1}}),
\end{equation}
and in the full model:
\begin{equation}\label{eq:server_download}
\textnormal{Tx}~k \textnormal{ knows: }\quad  (W_{d_{k-1}}, \ W_{d_k}, \ W_{d_{k+1}}),
\end{equation}

{\textit{Phase~3 --- Delivery to Users:}}
The transmitters communicate the demanded messages 
%\begin{equation}
$W_{d_1}, W_{d_2}, \ldots, W_{d_K}$
%\end{equation}
to Rx~$1, \ldots, K$, respectively. 
%\end{itemize}
We assume that each transmitter and each receiver in the network knows the demand vector\footnote{Communicating $\mathbf{d}\in\{1,\ldots, D\}^K$ to all terminals requires zero rate.}
\begin{equation}
\mathbf{d} := (d_1, \ldots, d_K).
\end{equation}
(Tx~$k$ however only knows messages $W_{d_k}$ and $W_{d_{k+1}}$, \eqref{eq:server_download}.)

In the soft-handoff model, Tx~$k$ computes its inputs as 
\begin{equation}
X_k^n = f_{\SH,k}^{(n)}(W_{d_k}, W_{d_{k+1}}, \mathbf{d}),
\end{equation}
and in the full model as
\begin{equation}
X_k^n = f_{\Full,k}^{(n)}(W_{d_{k-1}}, W_{d_k}, W_{d_{k+1}}, \mathbf{d}).
\end{equation}
The encoding functions $f_{\SH,k}^{(n)} \colon \mathcal{W}^2\times \{1,\ldots, D\}^K\to \mathbb{R}^n$ and  $f_{\Full,k}^{(n)} \colon \mathcal{W}^3\times \{1,\ldots, D\}^K\to \mathbb{R}^n$ have to  satisfy~\eqref{eq:power}.

Each Rx~$k$ guesses its demanded message $W_{d_k}$
as
\begin{equation}
\hat{W}_{d_k} = g_k^{(n)}(Y_k^n, \mathbb{V}_k, \mathbf{d}),
\end{equation}
for some decoding function $g^{(n)}_k \colon \mathbb{R}^n \times \{1,\ldots, \lfloor 2^{n{M}} \rfloor \} \times \{1,\ldots, D\}^K \to \mathcal{W}.$

An error occurs in the communication whenever
\begin{equation}
\hat{W}_{d_k} \neq W_{d_k} \qquad \textnormal{for \rt{any} k}\in\{1,\ldots, K\}.
\end{equation}

Given power constraint $P$, a rate-memory pair $(R, \mathcal{M})$ is said \emph{achievable}, if there exist encoding and decoding functions with vanishing probabilities of error as $n\to \infty$.

We perform a high-SNR analysis of our setup where $P\gg 1$. 
A per-user  rate-memory MG pair $(\mathcal{S}, \mu)$ is said achievable, if for each power $P$ there exists an achievable rate-memory pair $(R_P, \M_P)$ so that %the sequence $\{(R_P, \M_P)\}_{P>1}$ satisfies the following two limits: 
\begin{subequations}
\begin{IEEEeqnarray}{rCl}
\varlimsup_{K\to \infty}\varlimsup_{P\to \infty} \frac{1}{K}  \cdot \frac{R_P }{\frac{1}{2} \log(1+P)}\geq \mathcal{S} \\
\varlimsup_{P\to \infty}  \cdot \frac{\M_P }{\frac{1}{2} \log(1+P)}\geq \mathcal{\mu}.
\end{IEEEeqnarray}
\end{subequations}

%To this end, let 
%%Consider the high-SNR regime where $P\gg 1$. Let 
% the cache size grow logarithmically in $P$:
%\begin{equation}
%\mathcal{M}= \mu \cdot \frac{1}{2} \log(1+P),
%\end{equation}
%for some nonnegative real number $\mu\geq 0$. 

We are  interested in the \emph{per-user rate-memory MG tradeoff}
\begin{equation*} 
\mathcal{S}^*(\mu):= \sup\Big\{ \mathcal{S}\geq 0 \colon (\mathcal{S}, \ \mu) \textnormal{ an achievable MG pair}\Big\}.\end{equation*}
We use subscripts $_\SH$ and $_\Full$ to differentiate the per-user rate-memory MG tradeoffs of the two models.

\section{Results}
 
%\bigskip
Let 
\begin{equation}
\mathcal{S}_{\SH,\textnormal{ach}}(\mu):= \begin{cases} \frac{2}{3} +\frac{3}{2} \frac{\mu}{D}&  \quad \textnormal{ if } 0 \leq \frac{\mu}{D} \leq \frac{2}{3} \\
1 + \frac{\mu}{D}&  \quad \textnormal{ if }  \frac{\mu}{D} \geq \frac{2}{3}.
\end{cases}
\end{equation}
\begin{theorem}[Soft-Handoff Model]\label{rate-memory-tradeoff}
For Wyner's circular soft-handoff model: 
\begin{equation}
\min\left\{ \frac{2}{3} + \frac{3\mu}{ D}\ , \ 1+ \frac{\mu}{D}\right\}\geq \mathcal{S}_\SH^*(\mu) \geq \mathcal{S}_{\SH,\textnormal{ach}}(\mu).
\end{equation}
\end{theorem}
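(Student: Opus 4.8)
\smallskip
\noindent\textbf{Proof plan.} The statement has two halves, a converse ($\mathcal S_\SH^*(\mu)\le\min\{\tfrac23+3\mu/D,\,1+\mu/D\}$) and an achievability bound ($\mathcal S_\SH^*(\mu)\ge\mathcal S_{\SH,\textnormal{ach}}(\mu)$), and I would treat them separately.

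\emph{Converse.} The bound $\mathcal S_\SH^*(\mu)\le 1+\mu/D$ is the ``interference-free'' bound: upper-bound the soft-handoff network by the same network with every cross-gain set to $0$ (equivalently, genie each Rx~$k$ the interfering signal $\alpha_k X_{k-1}^n$), which decouples it into $K$ parallel point-to-point AWGN links, each carrying one of $D$ files and equipped with a cache of $\approx\mu\cdot\frac12\log P$ bits; for a single such link, letting the user cycle through the $D$ possible demands shows its cache must assist decoding all $D$ files, yielding the single-link MG $1+\mu/D$. The harder bound $\mathcal S_\SH^*(\mu)\le\frac23+3\mu/D$ is a genuine network converse, which I would establish by a genie-aided entropy argument on a window of consecutive receivers that (i) exploits the \emph{limited cognition}---$X_k^n$ is a function of $W_{d_k},W_{d_{k+1}}$ only---together with the cyclic interference pattern to show that, per user, at most $\tfrac23$ of a clean degree of freedom is usable (this recovers the DoF-$\tfrac23$ result for Wyner's soft-handoff model at $\mu=0$), and (ii) runs this argument over $\Theta(D)$ distinct demand vectors, so that each fixed cache $\mathbb V_k$ must simultaneously help decode all $D$ library files; the term $H(\mathbb V_k)\le\mu\cdot\frac12\log P$ then enters divided by $\Theta(D)$, producing the $3\mu/D$ contribution.

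\emph{Achievability.} I would obtain $\mathcal S_{\SH,\textnormal{ach}}(\mu)$ by memory-sharing two extreme schemes. For $\mu=0$, a scheme achieving $\mathcal S=\tfrac23$ by interference alignment over three channel uses: since Tx~$k$ knows $W_{d_{k+1}}$, let it transmit $\mathbf a_k u_k+c_k\,\mathbf a_{k-1}v_{k+1}\in\mathbb R^3$, where $u_k$ carries a fresh symbol of $W_{d_k}$, $v_{k+1}$ a fresh symbol of $W_{d_{k+1}}$, and $\{\mathbf a_k\}\subset\mathbb R^3$ are generic with every three cyclically-consecutive vectors linearly independent. At Rx~$k$ the interference from Tx~$k$ (the $v_{k+1}$ term) and from Tx~$k-1$ (the $u_{k-1}$ term) both fall in the single direction $\mathbf a_{k-1}$, while the wanted symbols $u_k,v_k$ occupy $\mathbf a_k,\mathbf a_{k-2}$; inverting the basis $[\mathbf a_k\,|\,\mathbf a_{k-1}\,|\,\mathbf a_{k-2}]$ leaves two noisy scalar channels, hence two symbols per three channel uses, i.e.\ per-user MG $\tfrac23$, for any fixed $\alpha_k\neq0$. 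For $\mu=\tfrac23 D$, a scheme achieving $\mathcal S=1+\tfrac23=\tfrac53$ that removes the interference penalty entirely, using: (a) separable broadcast encoding---Tx~$k$ sends, on disjoint channel dimensions, a codeword for a designated portion of its near receiver's file $W_{d_k}$ and one for a designated portion of its far receiver's file $W_{d_{k+1}}$; (b) cache-aided interference cancellation---Rx~$k$ uses its cache to reconstruct exactly those portions of $W_{d_{k-1}}$ and $W_{d_{k+1}}$ that appear as interference in $Y_k^n$, subtracts them, and is left with an interference-free point-to-point link; and (c) Maddah-Ali--Niesen-type coded prefetching, localized to the relevant file triples, so that although demands are unknown at placement time, for every demand vector the needed portions---and the local-caching part of the own file---are recoverable from the cache plus a small coded-multicast delivery. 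Balancing the portion sizes so that delivery runs at full MG $1$ while the cache budget is exactly $\tfrac23 D\cdot\frac12\log P$ gives $\mathcal S=\tfrac53$; caching a larger fraction of each file with the same scheme gives $\mathcal S=1+\mu/D$ for all $\mu\ge\tfrac23 D$; and memory-sharing (fraction $\tfrac{3\mu}{2D}$ of the resources on the second scheme, the rest on the first) gives $\mathcal S=\tfrac23+\tfrac32\tfrac\mu D$ for $0\le\mu/D\le\tfrac23$, matching $\mathcal S_{\SH,\textnormal{ach}}(\mu)$. Edge effects of the cyclic topology vanish as $K\to\infty$.

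\emph{Main obstacle.} On the converse side it is pinning the constant to $\tfrac23$: a plain cut-set bound on a window of receivers only yields $1+\mu/D$ (and grouping several receivers makes the cache term worse, of the form $1+c\mu/D$ with $c>1$), so one needs a genie that is at once strong enough to decouple the receivers and cheap enough---costing roughly a third of the rate---to leave $\tfrac23$ rather than $1$, while still carrying the $\Theta(D)$ demand vectors that produce the $\mu/D$ scaling. On the achievability side it is the cache-placement design for the $\mu=\tfrac23 D$ scheme: the same $\tfrac23 D\cdot\frac12\log P$ bits must, for \emph{every} demand vector, provide both the local-caching gain for the own file and the precise, demand-dependent portions of the two neighbouring files needed for cancellation; verifying that the coded-caching construction fits this budget is exactly what forces the threshold $\mu=\tfrac23 D$ (and the assumption $D\ge6$).
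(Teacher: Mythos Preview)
The paper omits the converse entirely (``Converse omitted''), so there is nothing to compare your converse sketch against; your outline for the $1+\mu/D$ bound is standard, and your plan for $\tfrac23+3\mu/D$ is plausible but cannot be checked here.

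On achievability your overall architecture matches the paper's: establish the two corner points $(\mathcal S,\mu)=(\tfrac23,0)$ and $(\tfrac53,\tfrac23 D)$, time-share between them for $0\le\mu/D\le\tfrac23$, and extend to $\mu/D\ge\tfrac23$ by caching an extra fraction of every file (your last step is exactly the paper's Proposition~\ref{prop}). For the $\mu=0$ corner the paper simply cites~\cite{lapidothlevyshamaiwigger-2014-1}, whereas you give an explicit three-symbol interference-alignment construction; your scheme is correct and arguably more self-contained.

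The genuine gap is at the corner $(\tfrac53,\tfrac23 D)$, which carries all the work. Your plan keeps every transmitter active and relies on Rx~$k$'s cache to cancel the interfering portions of $W_{d_{k-1}}$ and $W_{d_{k+1}}$, with ``Maddah-Ali--Niesen-type coded prefetching, localized to the relevant file triples'' doing the rest---but you never specify the placement, and you yourself flag as the main obstacle that a single demand-independent placement of $\tfrac25$ of every file must simultaneously deliver the local-caching gain \emph{and} the exact interfering pieces for every demand vector. The paper's construction is quite different and fully explicit: split each file into five equal subparts and form a sixth as their XOR (so any five determine the message); cache subparts $(1,2)$, $(3,4)$, or $(5,6)$ at Rx~$k$ according to $k\bmod 3$; run delivery in three equal periods, and in each period \emph{silence} every third transmitter so that the network decomposes into non-interfering two-transmitter/three-receiver subnets. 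Within a subnet one transmitter sends a plain subpart and the other sends an XOR of two subparts, chosen so that each of the three receivers can strip what it has cached and extract one new subpart of its own file. Over the three periods every receiver collects three subparts from delivery plus two from cache, hence all five. The key device your proposal is missing is this periodic transmitter-silencing: it sidesteps the need for full-network cache-aided cancellation and is precisely what makes the $\tfrac23 D$ cache budget suffice.
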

\begin{IEEEproof}
Converse omitted. Direct part in Section~\ref{sec:direct1}.
\end{IEEEproof}
%Figure~\ref{fig:tradeoff} depicts these upper and lower bounds.
%\begin{figure}[h!]
%\centering
%\includegraphics[width=0.2\textwidth]{Figure/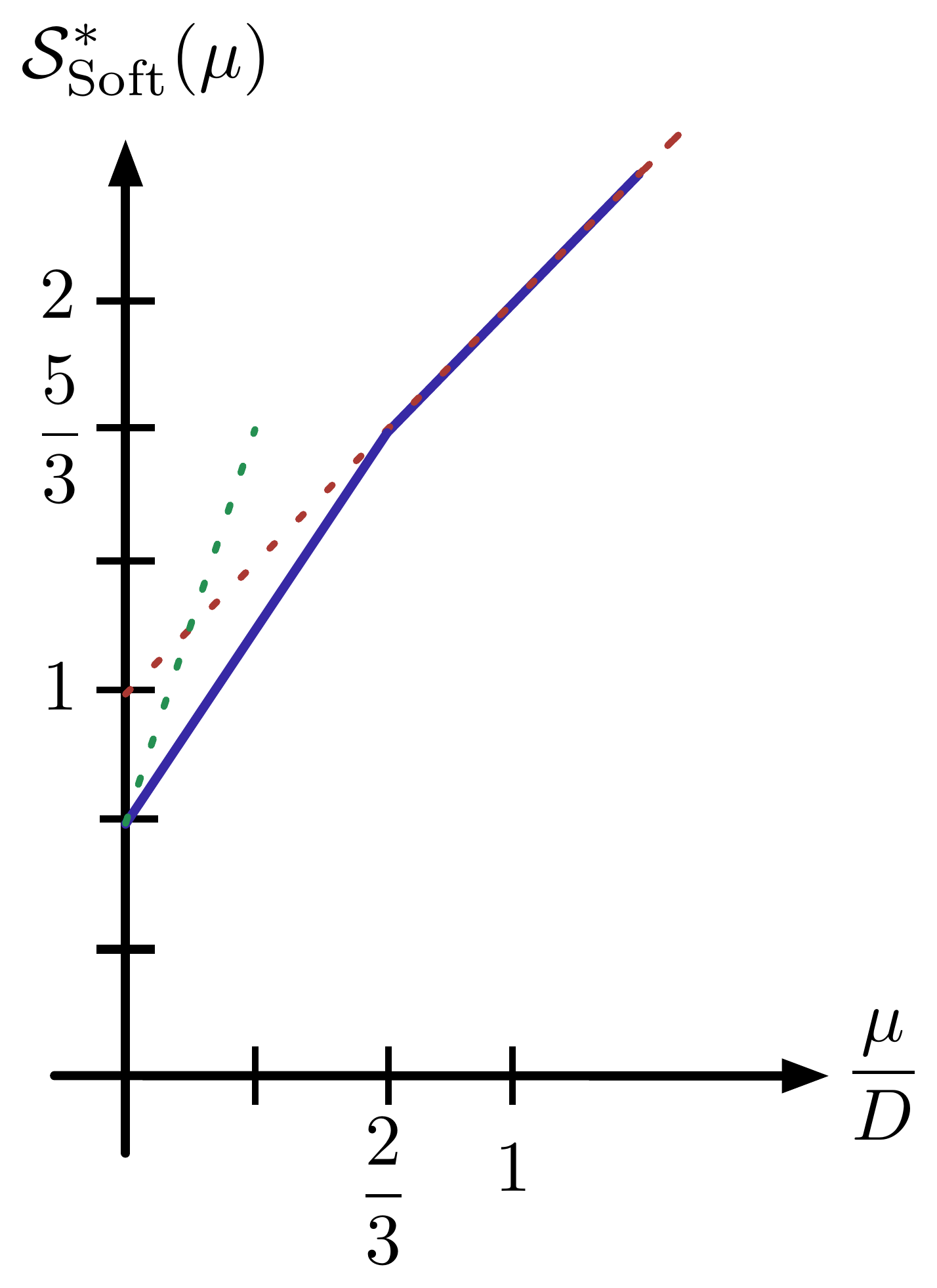}
%\caption{Upper and lower bounds on the per-user rate-memory MG tradeoff $\mathcal S^*(\mu)$ from Theorem~\ref{rate-memory-tradeoff}.}
%\label{fig:tradeoff}
%\vspace{-4mm}
%\end{figure}

\begin{corollary}\label{cor:1}For the soft-handoff model and
$\frac{\mu}{D}\geq \frac{2}{3}$:
\begin{equation}
 \mathcal{S}_\SH^*(\mu) = 1+ \frac{\mu}{D}.
\end{equation}
\end{corollary}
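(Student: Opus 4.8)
The plan is to read the corollary off directly from Theorem~\ref{rate-memory-tradeoff}: that theorem already sandwiches $\mathcal{S}_\SH^*(\mu)$ between a converse bound and the achievable value $\mathcal{S}_{\SH,\textnormal{ach}}(\mu)$, so the only thing to do is check that these two bounds coincide once $\frac{\mu}{D}\geq\frac{2}{3}$.

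First I would dispatch the lower bound. By the very definition of $\mathcal{S}_{\SH,\textnormal{ach}}(\mu)$, in the regime $\frac{\mu}{D}\geq\frac{2}{3}$ we have $\mathcal{S}_{\SH,\textnormal{ach}}(\mu)=1+\frac{\mu}{D}$, so the achievability half of Theorem~\ref{rate-memory-tradeoff} (whose scheme is built in Section~\ref{sec:direct1} from coded caching, cache-aided interference cancellation, and broadcast transmission) immediately yields $\mathcal{S}_\SH^*(\mu)\geq 1+\frac{\mu}{D}$. No further work is needed here.

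Next I would dispatch the upper bound. The converse half of Theorem~\ref{rate-memory-tradeoff} gives $\mathcal{S}_\SH^*(\mu)\leq\min\{\frac{2}{3}+\frac{3\mu}{D},\,1+\frac{\mu}{D}\}$, and since a minimum of two quantities never exceeds either argument, $\mathcal{S}_\SH^*(\mu)\leq 1+\frac{\mu}{D}$ holds for every $\mu\geq 0$, in particular for $\frac{\mu}{D}\geq\frac{2}{3}$. For completeness one may observe that in this range the minimum is in fact attained by $1+\frac{\mu}{D}$, since $1+\frac{\mu}{D}\leq\frac{2}{3}+\frac{3\mu}{D}$ is equivalent to $\frac{\mu}{D}\geq\frac{1}{6}$, which is implied by $\frac{\mu}{D}\geq\frac{2}{3}$.

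Combining the two inequalities gives $\mathcal{S}_\SH^*(\mu)=1+\frac{\mu}{D}$ whenever $\frac{\mu}{D}\geq\frac{2}{3}$. The ``main obstacle'' is thus entirely absorbed into Theorem~\ref{rate-memory-tradeoff} itself — concretely, the achievability construction certifying $\mathcal{S}_{\SH,\textnormal{ach}}(\mu)$ — and, granting that theorem, the corollary reduces to the one-line arithmetic remark that the converse curve and the achievability curve meet at $1+\frac{\mu}{D}$ on the range $\frac{\mu}{D}\geq\frac{2}{3}$.
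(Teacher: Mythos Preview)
Your proof is correct and matches the paper's approach: Corollary~\ref{cor:1} is stated without a separate proof and is meant to follow immediately from Theorem~\ref{rate-memory-tradeoff} by exactly the observation you make, namely that for $\frac{\mu}{D}\geq\frac{2}{3}$ the achievable lower bound $\mathcal{S}_{\SH,\textnormal{ach}}(\mu)=1+\frac{\mu}{D}$ coincides with the converse upper bound $\min\{\frac{2}{3}+\frac{3\mu}{D},\,1+\frac{\mu}{D}\}$.
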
 

Let 
\begin{equation}
\mathcal{S}_{\Full,\textnormal{ach}}(\mu):= \begin{cases} \frac{2}{3} +\frac{4}{3} \frac{\mu}{D}&  \quad \textnormal{ if } 0 \leq \frac{\mu}{D} \leq 1 \\
1 + \frac{\mu}{D}&  \quad \textnormal{ if }  \frac{\mu}{D} \geq 1.
\end{cases}
\end{equation}
\begin{theorem}[Full Model]\label{rate-memory-tradeoff_full}
For Wyner's circular full model: 
\begin{equation}
\min\left\{ \frac{2}{3} + \frac{6\mu}{  D}\ , \ 1+ \frac{\mu}{D}\right\}\geq \mathcal{S}_\Full^*(\mu) \geq \mathcal{S}_{\Full,\textnormal{ach}}(\mu).
\end{equation}
\end{theorem}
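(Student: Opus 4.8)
\emph{Overall plan.} I would prove the two upper bounds and the lower bound separately. For the lower bound, note that $\mu\mapsto\mathcal{S}_{\Full,\textnormal{ach}}(\mu)$ is piecewise linear: it is the segment from $(\mu/D,\mathcal{S})=(0,\tfrac23)$ to $(1,2)$, then the ray $\mathcal{S}=1+\mu/D$. Hence, by memory-sharing between cache allocations, it suffices to construct one scheme attaining $\mathcal{S}=\tfrac23$ at $\mu=0$ and, for every $\mu\ge D$, a scheme attaining $\mathcal{S}=1+\mu/D$; the intermediate range $0\le\mu/D\le1$ is then filled by time/frequency-sharing between the $\mu=0$ scheme and the $\mu=D$ scheme.

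\emph{Converse.} The bound $\mathcal{S}_\Full^*(\mu)\le 1+\mu/D$ is a single-cell cut-set bound averaged over demands: fix Rx~$k$; for each $j\in\{1,\ldots,D\}$ pick a demand vector with $d_k=j$; since Rx~$k$ recovers $W_j$ from $(Y_k^n,\mathbb{V}_k,\mathbf{d})$ and its received power is at most $(1+2|\alpha|)^2P$, Fano's inequality gives $nR\le I(W_j;\mathbb{V}_k)+\tfrac{n}{2}\log\!\big(1+(1+2|\alpha|)^2P\big)+n\epsilon_n$; summing over $j$ and using $\sum_j I(W_j;\mathbb{V}_k)\le H(\mathbb{V}_k)\le nM$ (the library files being mutually independent) yields $R\le M/D+\tfrac{1}{2}\log\!\big(1+(1+2|\alpha|)^2P\big)+\epsilon_n$, hence $\mathcal{S}\le 1+\mu/D$. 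The bound $\mathcal{S}_\Full^*(\mu)\le 2/3+6\mu/D$ I expect to come from the sliding-window entropy argument behind the (omitted) converse of Theorem~\ref{rate-memory-tradeoff}, adapted to two-sided interference: over a long run of consecutive cells one supplies a genie that decouples the window from the rest of the ring, bounds the differential entropies of the outputs inside the window, and charges the caches against a suitable family of demand vectors; the network structure contributes the constant $2/3$ and the cache budget the $6\mu/D$ term, the factor $6$ being the counterpart of the factor $3$ in the soft-handoff model. Fixing this constant — so that the bound is tight at $\mu=0$ (recovering the MG $2/3$ of the plain cyclic full network) and consistent with $1+\mu/D$ for $\mu/D\ge1$ — is the delicate part of the converse.

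\emph{Achievability.} In the spirit of the soft-handoff delivery scheme of Section~\ref{sec:direct1}, split each file $W_i$ into sub-files tagged by the role their parts will play, and store at Rx~$k$ (i) a coded-caching fraction of the ``own'' part of every file, and (ii) the sub-files of every file that, \emph{whatever the demand vector turns out to be}, make up the portions of the two interfering transmit codewords seen at Rx~$k$. For delivery, exploit that Tx~$k$ knows $W_{d_{k-1}},W_{d_k},W_{d_{k+1}}$: it broadcasts a superposition of a ``private'' layer carrying the still-missing part of $W_{d_k}$ and a ``cooperation'' layer; the cooperation layers of the neighbouring transmitters together with the interference-helper sub-files in the caches let each receiver first reconstruct and cancel the interference in $Y_k^n$ and then decode its own message over the resulting clean point-to-point Gaussian channel, as in coded caching~\cite{maddahaliniesen-2014-1}. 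The sub-file sizes and the power exponents of the superposition layers are the free parameters; optimising the prelog budget (channel prelog $1$, cache prelog $\mu$) subject to~\eqref{eq:power} gives $\tfrac23+\tfrac43\tfrac{\mu}{D}$ on $0\le\mu/D\le1$. Once $\mu/D\ge1$ the cache can hold all the interference-helper sub-files directly plus a $\mu/D$-fraction of the own part, so the cooperation layer is unnecessary: every receiver cancels interference from its cache alone and sees a genuine point-to-point channel with a receiver cache, achieving $1+\mu/D$ and matching the first converse bound.

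\emph{Main obstacle.} The crux is the joint cache/delivery design on $0\le\mu/D\le1$: one must verify that a single demand-independent placement supplies, for every $\mathbf{d}\in\{1,\ldots,D\}^K$, exactly the neighbouring-codeword sub-files each receiver needs for interference cancellation while spending only prelog $\mu$, and that the transmitters' cooperation layers stay jointly decodable at all receivers that must use them without overspending channel prelog $1$; this is the coded-caching/interference-cancellation bookkeeping that pins the slope at $\tfrac43$. On the converse side the matching difficulty is, as noted, pinning down the constant $6$ and stitching the two regimes together.
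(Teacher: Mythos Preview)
Your overall plan---two corner points plus time-sharing plus Proposition~\ref{prop}---is exactly what the paper does, and your cut-set sketch for the $1+\mu/D$ converse is sound (the paper omits both converse bounds). Where you diverge is in the achievability details for the intermediate range and, more importantly, for the corner point $(\mathcal S,\mu/D)=(2,1)$.

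The paper does \emph{not} use superposition, cooperation layers, or coded-caching bookkeeping for $0\le\mu/D\le1$; it simply time-shares between $(\tfrac23,0)$ (taken from~\cite{lapidothlevyshamaiwigger-2014-1}) and an explicit scheme at $(2,1)$. That corner-point scheme is much cleaner than what you sketch: split every file into two equal halves $W_d=(W_d^{(1)},W_d^{(2)})$; cache \emph{all} first halves $W_1^{(1)},\ldots,W_D^{(1)}$ at every odd receiver and all second halves at every even receiver; have odd Tx~$k$ send a codeword for $W_{d_k}^{(2)}$ and even Tx~$k$ a codeword for $W_{d_k}^{(1)}$. Then both of Rx~$k$'s interferers transmit halves that Rx~$k$ has cached, so it subtracts them and decodes its missing half over a clean point-to-point link. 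This gives $R=2\cdot\tfrac12\log(1+P)$ with $M=D\cdot\tfrac{R}{2}$, i.e.\ $(\mathcal S,\mu)=(2,D)$. The ``even/odd'' parity coordination is the whole trick and is what lets you get away with caching only half the library per receiver; your description (``interference-helper sub-files \ldots\ whatever the demand vector'') gestures at this but never pins it down, and without it one would naively need the full library cached, landing at $(\mathcal S,\mu/D)=(1,1)$ rather than $(2,1)$. For $\mu/D>1$ the paper then invokes Proposition~\ref{prop}, exactly as you say.

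So: drop the superposition/cooperation-layer machinery and the ``optimising the prelog budget'' step---they are not needed and they obscure the one idea that does the work.
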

\begin{IEEEproof}
Converse omitted. Direct part in Section~\ref{sec:direct2}.
\end{IEEEproof}

\begin{corollary}\label{cor:2}
For the full model and for $\frac{\mu}{D}\geq 1$:
\begin{equation}
 \mathcal{S}_\Full^*(\mu) = 1+ \frac{\mu}{D}.
\end{equation}
\end{corollary}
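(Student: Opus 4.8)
The plan is to read this off directly from Theorem~\ref{rate-memory-tradeoff_full}, so the only work is an elementary comparison of the two affine bounds sandwiching $\mathcal{S}_\Full^*(\mu)$. First I would evaluate the achievability (lower) bound in the regime of interest: by the very definition of $\mathcal{S}_{\Full,\textnormal{ach}}(\mu)$, whenever $\mu/D \geq 1$ we have $\mathcal{S}_{\Full,\textnormal{ach}}(\mu) = 1 + \mu/D$, so Theorem~\ref{rate-memory-tradeoff_full} immediately gives $\mathcal{S}_\Full^*(\mu) \geq 1 + \mu/D$.

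Next I would evaluate the converse (upper) bound $\min\{\tfrac23 + \tfrac{6\mu}{D},\ 1 + \tfrac{\mu}{D}\}$ in the same regime. Comparing the two affine functions $x \mapsto \tfrac23 + 6x$ and $x \mapsto 1 + x$ of the variable $x = \mu/D$, they agree at $x = \tfrac{1}{15}$ and for $x \geq \tfrac{1}{15}$ one has $1 + x \leq \tfrac23 + 6x$. Since $\mu/D \geq 1 > \tfrac{1}{15}$, the minimum is attained by the second term, i.e.\ the converse in Theorem~\ref{rate-memory-tradeoff_full} reads $\mathcal{S}_\Full^*(\mu) \leq 1 + \mu/D$.

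Combining the two displays yields $\mathcal{S}_\Full^*(\mu) = 1 + \mu/D$ for all $\mu/D \geq 1$, which is the claim. There is essentially no obstacle here beyond the trivial comparison above; all of the substance is already absorbed into Theorem~\ref{rate-memory-tradeoff_full}, whose converse part is cited and whose direct part is the construction carried out in Section~\ref{sec:direct2} (combining coded caching, cache-aided interference cancellation at the receivers, and broadcast transmission). If one wished to make the corollary self-contained I would instead point out that the matching lower bound $1 + \mu/D$ is exactly the point-to-point-with-receiver-cache performance, recovered here because the delivery scheme of Section~\ref{sec:direct2} uses the cache prelog $\mu \geq D$ to null out both interfering signals at every receiver, reducing each link to an interference-free Gaussian channel.
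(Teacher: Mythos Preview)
Your argument is correct and is exactly what the paper intends: Corollary~\ref{cor:2} is stated without proof because it follows immediately from the two bounds in Theorem~\ref{rate-memory-tradeoff_full}, and your elementary comparison of the affine upper bounds (crossing at $\mu/D=1/15$) together with the piecewise definition of $\mathcal{S}_{\Full,\textnormal{ach}}(\mu)$ makes this explicit.
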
 
\vspace{-4mm}
\begin{figure}[h!]
\subfigure[Soft-Handoff Model]{
\centering
\includegraphics[width=0.2\textwidth]{Smu_tradeoff.pdf}}
\hspace{.7cm}
\subfigure[Full Model]{\centering \includegraphics[width=0.22\textwidth]{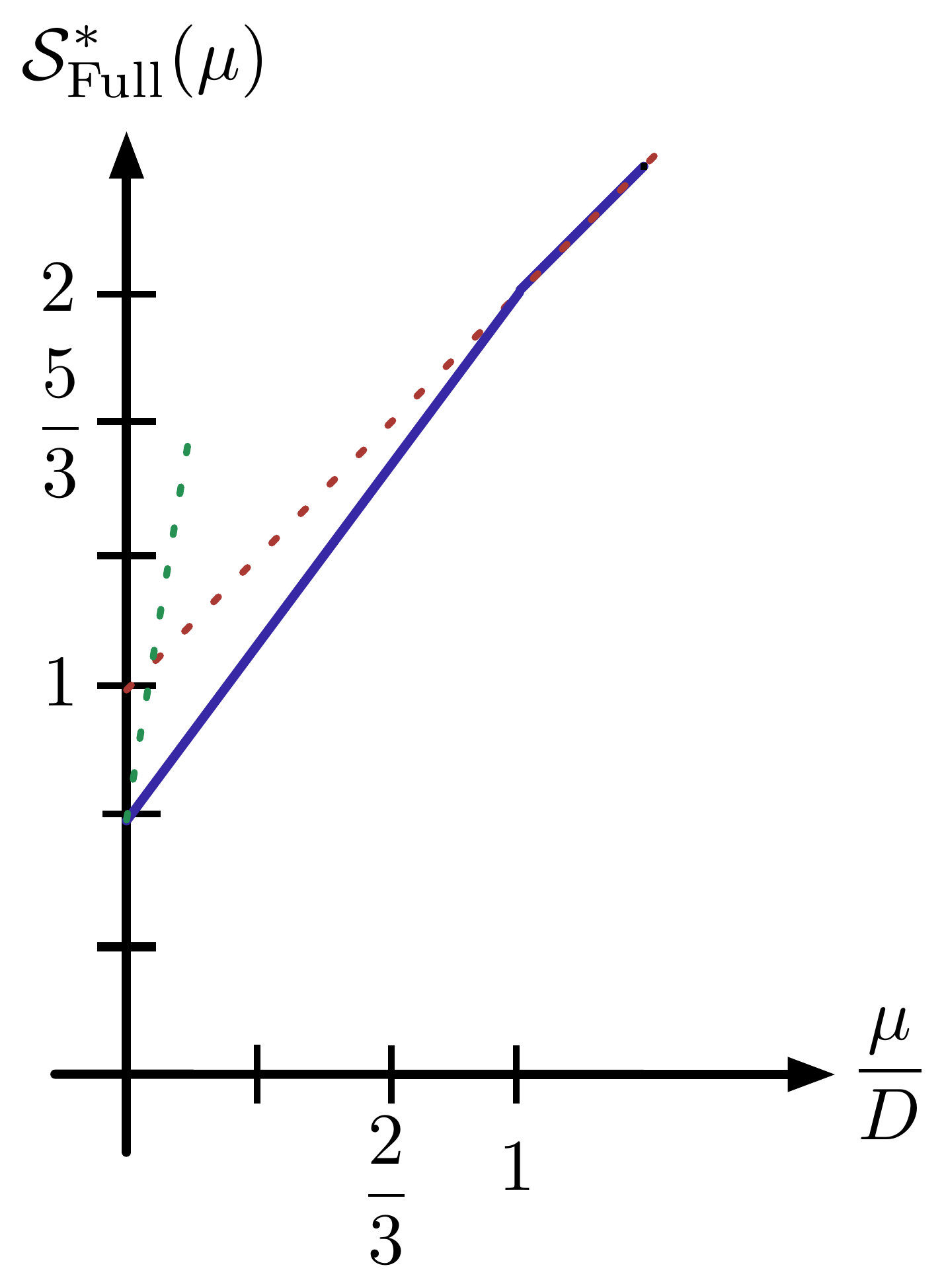}}
\caption{Upper and lower bounds on the per-user rate-memory MG tradeoffs.} 
\label{fig:tradeoff}
\end{figure}

Figure~\ref{fig:tradeoff}  plots the upper and lower bounds  on the per-user rate-memory MG tradeoff for the soft-handoff and the full circular Wyner model in Theorems~\ref{rate-memory-tradeoff} and \ref{rate-memory-tradeoff_full}. Without cache memories,  $\mu=0$, the per-user MG is $\frac{2}{3}$ for both models. 

In the regime $\frac{\mu}{D} \in [0, 2/3]$, the achievable per-user rate-memory MG tradeoffs $\mathcal{S}_{\SH,\textnormal{ach}}(\mu)$ and $\mathcal{S}_{\Full,\textnormal{ach}}(\mu)$ grow as a function of $\frac{\mu}{D}$  with slopes $\frac{3}{2}$ and $\frac{4}{3}$, respectively. The two slopes can be understood as follows. In the soft-handoff model, slope $\frac{3}{2}$ indicates that to achieve additional MG $\zeta\leq 1/3$ on the transmission over the network, \emph{two} cache memories of size $\zeta$ are required: one to cancel the interference caused by this additional transmission, and one to cancel  existing interference at the receiver that wants to decode the additional transmission. 
In the full model, %slope $\frac{4}{3}$ indicates that  to achieve
\emph{three} cache memories of size~$\zeta$ are required to transmit an additional MG $\zeta\leq 1/3$ over the network: two cache memories are needed to cancel the symmetric interference caused by the additional transmission, and one to cancel  existing interference at the receiver that wants to decode the additional transmission. 

In both scenarios,  when  the cache memories are sufficiently large, the per-user rate-memory MG tradeoff equals $1+\frac{\mu}{D}$ (Cor.~\ref{rate-memory-tradeoff} and \ref{rate-memory-tradeoff_full}), the same as in an interference-free network. % So, the same  per-user rate-memory MG tradeoff is attained as in an interference-free network with receiver caching.

\section{Direct parts}\label{sec:Direct}
%We have the following general proposition.
 \begin{proposition}\label{prop}
 Consider an arbitrary interference network. If rate-memory pair $(R, \M)$ is achievable, then for every positive $\Delta$ \rt{the} rate-memory pair
\begin{equation}
\Big(R+ \frac{\Delta}{D}, \; \M+  \Delta\Big)\quad  \textnormal{ is achievable}.
\end{equation} 
\end{proposition}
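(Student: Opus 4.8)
The plan is to prove this by a simple file-splitting (superposition) argument in which the extra rate increment is delivered entirely from the receivers' caches, so that the underlying channel code is left untouched. Fix a scheme achieving $(R,\M)$ with blocklength $n$ and vanishing probability of error. Given $\Delta>0$, split each library file $W_i$ of the \emph{new} rate $R+\frac{\Delta}{D}$ into two independent parts, $W_i=(W_i^{(a)},W_i^{(b)})$, where the ``channel part'' $W_i^{(a)}$ has rate $R$ and the ``cache part'' $W_i^{(b)}$ has rate $\frac{\Delta}{D}$.

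For the caching phase, in addition to whatever the original scheme stores, each receiver pre-stores the cache parts of \emph{all} $D$ files, i.e., $(W_1^{(b)},\ldots,W_D^{(b)})$. Each such part consists of about $n\frac{\Delta}{D}$ bits, so the total additional storage is about $D\cdot n\frac{\Delta}{D}=n\Delta$ bits, and the cache size grows from $n\M$ to $n(\M+\Delta)$. This is a legitimate caching operation, since the stored content depends only on the library and not on the demands $\mathbf{d}$.

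In the download and delivery phases, the transmitters download and transmit only the channel parts $W_{d_k}^{(a)}$ of the requested files, running the original $(R,\M)$-scheme over the interference network; this is consistent with the allowed download structure since the channel parts are functions of the requested files. Each receiver~$k$ decodes $\hat W_{d_k}^{(a)}$ exactly as in the original scheme, so its error probability is unchanged and still vanishes, and reads $W_{d_k}^{(b)}$ directly from its cache, which is always available regardless of $\mathbf{d}$; it then outputs $\hat W_{d_k}=(\hat W_{d_k}^{(a)},W_{d_k}^{(b)})$. This delivers files at per-file rate $R+\frac{\Delta}{D}$ using cache size $\M+\Delta$, establishing the claim. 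The only point requiring a word of care is that $nR$ and $n\frac{\Delta}{D}$ need not be integers, but the associated floors and ceilings contribute only $o(1)$ to the rate and memory as $n\to\infty$, so achievability is preserved in the limit; I do not expect any genuine obstacle.
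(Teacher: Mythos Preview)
Your proof is correct and follows essentially the same approach as the paper: split each file into a rate-$R$ part handled by the original scheme and a rate-$\frac{\Delta}{D}$ part stored in every receiver's cache, at an additional memory cost of $n\Delta$ bits. The extra remarks you include about demand-independence of the caching and the $o(1)$ integer-rounding effects are fine elaborations but not points of departure.
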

\begin{proof}For each $d\in\{1,\ldots, D\}$, split message $W_d$ into two submessages
\begin{equation*}
W_d= (W_d^{(1)}, W_d^{(2)}), 
\end{equation*} of rates $R$ and  $\frac{\Delta}{D}$. 
Apply the caching and delivery strategies of \rt{any scheme that achieves  the} rate-memory pair $(R, \M)$  to submessages $\big\{W_d^{(1)}\big\}_{d=1}^D$. Additionally, cache all submessages $W_{1}^{(2)}, \ldots, W_D^{(2)}$ at each and every receiver. This requires an additional memory size of $n\Delta$ bits.
\end{proof}

 \subsection{Direct part for Theorem~\ref{rate-memory-tradeoff}} \label{sec:direct1}

The scheme in \cite{lapidothlevyshamaiwigger-2014-1} achieves per-user rate-memory MG pair
\begin{equation}\label{eq:tradeoff1}
\mathcal S_\SH= \frac{2}{3} \quad \textnormal{and} \quad \mu=0.
\end{equation}

 In the following we show achievability of  
% per-user rate-memory MG pair
\begin{equation}\label{eq:tradeoff3}
\mathcal S_\SH= \frac{5}{3} \quad \textnormal{and} \quad \mu=\frac{2}{3}D.
\end{equation}

Time-sharing the schemes achieving  \eqref{eq:tradeoff1}  and \eqref{eq:tradeoff3} yields
\begin{equation}\label{eq:MG1}
 \mathcal{S}_\SH^*(\mu) \geq \frac{2}{3} + \frac{3}{2} \cdot \frac{\mu}{D}, \qquad \frac{\mu}{D}\in\Big[0,\ \frac{2}{3}\Big].
\end{equation}

Furthermore, by \eqref{eq:tradeoff3} and  proposition~\ref{prop}, 
\begin{equation}\label{eq:MG2}
 \mathcal{S}^*(\mu) \geq 1+\frac{\mu}{D}, \qquad \frac{\mu}{D}\in\Big[ \frac{2}{3}, \infty\Big),
\end{equation}
which concludes the proof.

%\subsection{Scheme achieving \eqref{eq:tradeoff3}:}\label{sec:tradeoff3}

We now explain the scheme achieving~\eqref{eq:tradeoff3}.
Define 
\begin{equation}\label{eq:alphamin}
\alpha_{\min}:= \min\{ 1, |\alpha_1|, \ldots, |\alpha_K|\}.
\end{equation}

Fix a small $\epsilon>0$ and let the  message rate $R$ be
\begin{equation}\label{eq:R2}
R= \frac{5}{3} \cdot \frac{1}{2} \log \left( 1+\alpha_{\min}^2(P-\epsilon)\right)-5 \epsilon.
\end{equation}
Split each message  into five independent submessages,
\begin{equation*}
W_d =\Big(W_d^{(1)}, W_{d}^{(2)}, W_d^{(3)}, W_{d}^{(4)}, W_d^{(5)}\Big), \quad d\in\{1,\ldots, D\},
\end{equation*}
where each of these submessages is of equal rate 
\[R/5=\frac{1}{3}\cdot \frac{1}{2}\log(1+\alpha_{\min}^2(P-\epsilon)) -\epsilon.\]
We form a sixth \emph{coded submessage}  %for $d\in\{1,\ldots, D\}$:
\begin{equation*}
W_{d}^{(6)} := W_d^{(1)} \bigoplus W_{d}^{(2)} \bigoplus W_d^{(3)} \bigoplus  W_{d}^{(4)} \bigoplus W_d^{(5)},
\end{equation*}
where $\bigoplus$ denotes the x-or operation performed on the bit-string representation of the messages. 
In our scheme, each Rx~$k$ will recover arbitrary five submessages pertaining to its desired message $W_{d_k}$. It will then x-or these five submessages to obtain an estimate of  the missing sixth submessage, which will allow it to form  an estimate of its desired $W_{d_k}$.

\textit{Caching:} The server makes the following cache assignment (see also figure~\ref{fig:scheme3_period1})
where $\mod$ denotes modulo operation: 
\begin{equation*}
\mathbb{V}_k= \begin{cases} \big(W_1^{(1)}, W_1^{(2)}, \ldots, W_{D}^{(1)}, W_{D}^{(2)}\big), & \textnormal{ if } k \mod 3 =1, \\
  \big(W_1^{(3)}, W_1^{(4)}, \ldots, W_{D}^{(3)}, W_{D}^{(4)}\big), & \textnormal{ if }k \mod 3 =2, \\
 \big(W_1^{(5)}, W_1^{(6)}, \ldots, W_{D}^{(5)}, W_{D}^{(6)}\big), & \textnormal{ if } k \mod 3 =0. \\\end{cases}
\end{equation*}
 
This cache assignment requires cache memory size
\begin{equation}\label{eq:M2}
\M = 2R/5=\frac{2}{3} \cdot \frac{1}{2}\log(1+\alpha_{\min}^2(P-\epsilon)) -2\epsilon.
\end{equation}

\textit{Delivery to Users:} Communication is split into periods 1--3 with $\lfloor \frac{n}{3} \rfloor$ consecutive channel uses each. 
In figures~\ref{fig:scheme3_period1}--\ref{fig:scheme3_period3} we depict for each period the submessages  sent by the transmitters and the submessages decoded at the receivers. 
\begin{figure}[h!]
\vspace{-2mm}
\centering
\includegraphics[width=1.0\columnwidth]{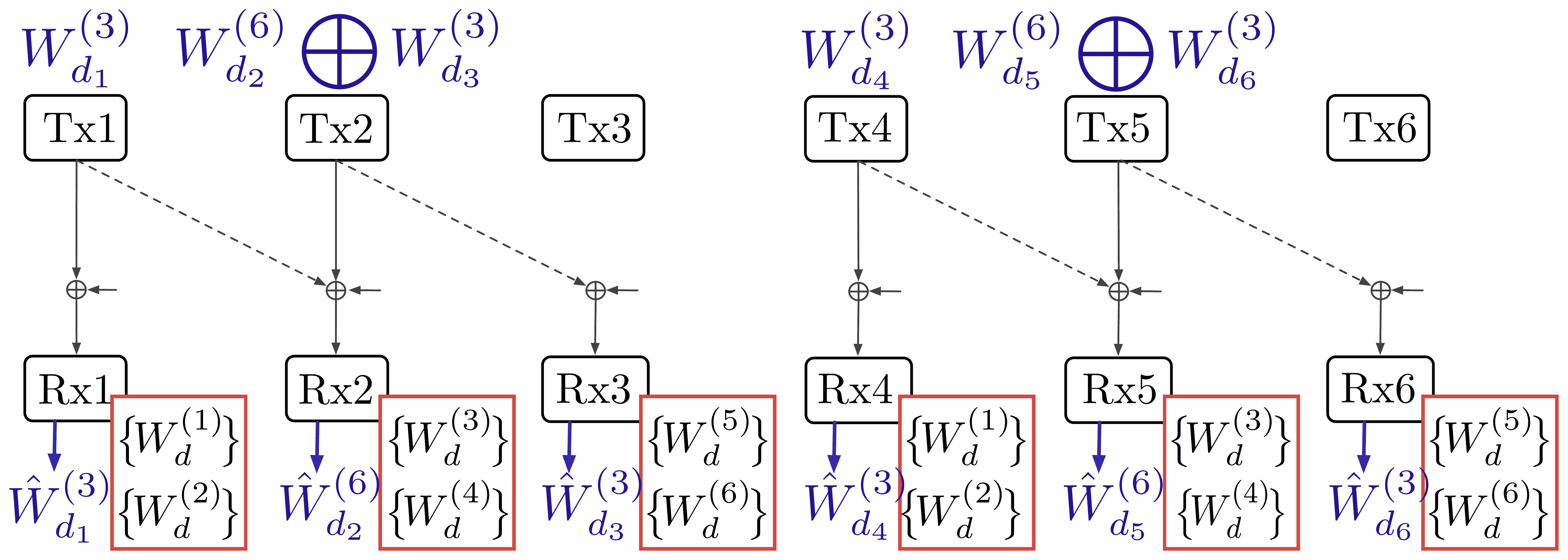}
\caption{Submessages sent during period~1.}
\label{fig:scheme3_period1}
\end{figure}
\vspace{-3mm}

\begin{figure}[h!]
\centering
\includegraphics[width=1\columnwidth]{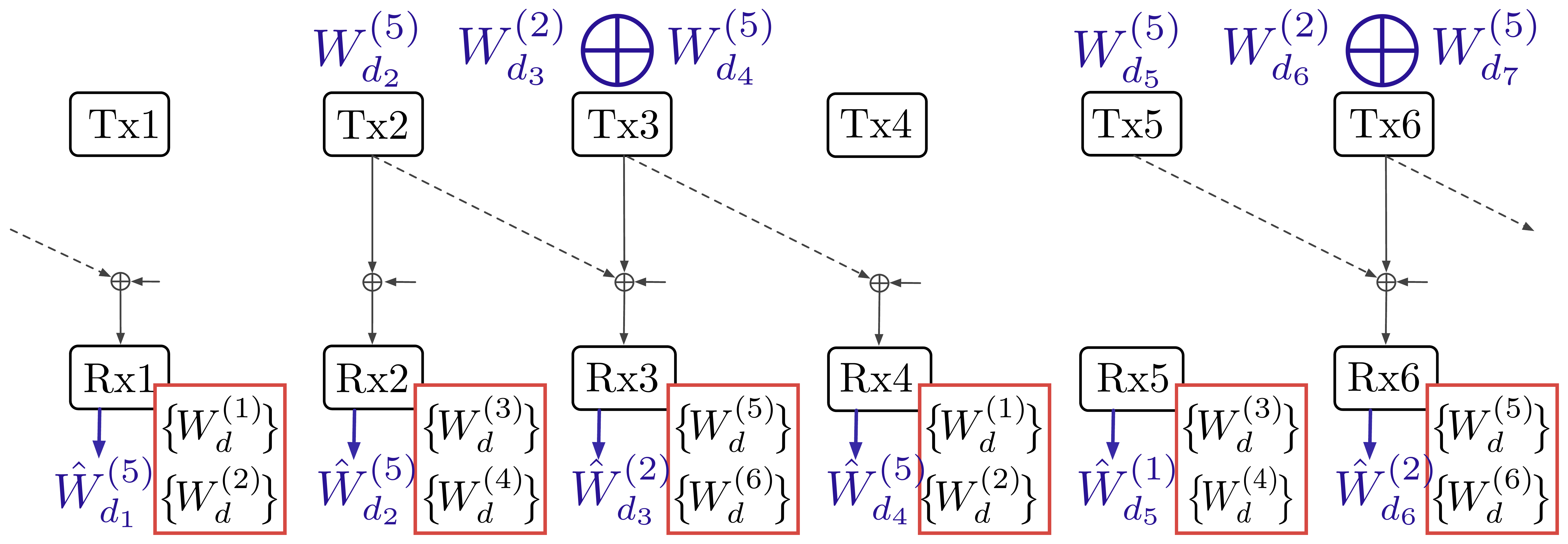}
\caption{Submessages sent during period~2.}
\label{fig:scheme3_period2}
\end{figure}
\vspace{-3mm}

\begin{figure}[h!]
\centering
\includegraphics[width=1\columnwidth]{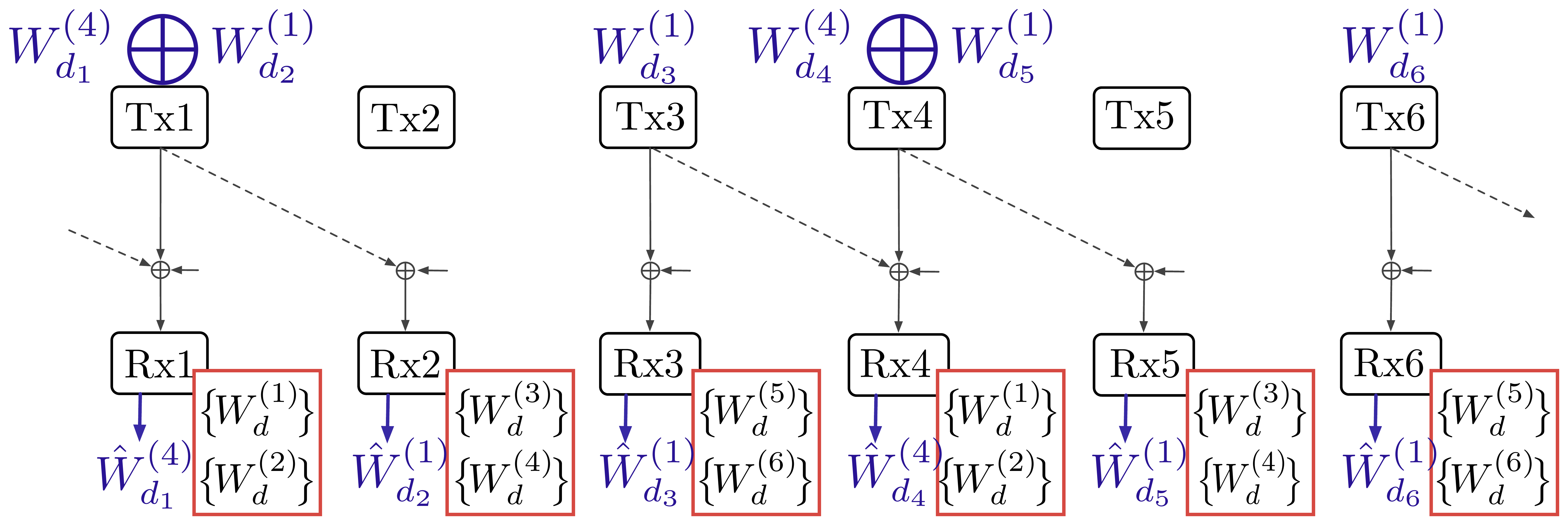}
\caption{Submessages sent during period~3.}
\label{fig:scheme3_period3}
\vspace{0mm}
\end{figure}

In each period, every third transmitter sets its channel inputs to $0^n$. 
In period~1, for example, 
\begin{equation*}
X_k^n= 0^n, \quad \textnormal{ if } k\mod 3 =0.
\end{equation*}
Similarly, in period~2, $X_k^n=0^n$ if $k\mod 3 =1$ and in period~3, $X_k^n=0^n$ if $k\mod 3=2$.
In all three periods, also Tx~$K$ sets its channel inputs to $0^n$.
This decomposes the network into non-interfering subnets of two active transmitters and three receivers.  (If $K \mod 3 =2$,  there is a last subnet with only Tx~$K-1$ and Rx~$K-1$.)

For example, in period~1, the first subnet
comprises Txs~1--2 and Rxs 1--3, see figure~\ref{fig:scheme3_period1}. 
We now describe communication over this first subnet during period~1, assuming that $d_1, d_2, d_3$ are all different. (Otherwise, a slightly changed scheme achieves the same performance.)

 Randomly draw for $k\in\{1,2\}$ a Gaussian codebook $\mathcal{C}_k:= \big\{X_k^{\lfloor n/3\rfloor}(1),\ldots, X_k^{\lfloor n/3\rfloor}(\lfloor 2^{nR/5}\rfloor )\big\}$, of length $\lfloor \frac{n}{3}\rfloor$,  rate-$3R/5$, and power $P-\epsilon$. Reveal the codebooks to all terminals of the subnet.

Tx~$1$ sends message $W_{d_1}^{(3)}$. To this end, it picks codeword 
\begin{equation}
X_{1}^{\lfloor n/3\rfloor}\big(W_{d_1}^{(3)}\big)
\end{equation}  from its codebook $\mathcal{C}_{1}$ and sends it over the network. 
%\begin{equation}
%X_1^{\lfloor n/3\rfloor }\Big(W_{d_1}^{(3)}\Big).
%\end{equation}

Tx~$2$ forms  $W_{d_2}^{(6)} \bigoplus W_{d_3}^{(3)}$. It picks  codeword %again using a Gaussian point-to-point code. So,
\begin{equation}
X_2^{\lfloor n/3\rfloor } \big(W_{d_2}^{(6)} \bigoplus W_{d_3}^{(3)} \big)
\end{equation}
from its codebook $\mathcal{C}_2$ and sends it over the network.

Rx~$1$ optimally  decodes submessage $W_{d_1}^{(3)}$ based on its outputs
\begin{equation}\label{eq:Y1}
Y_1^{\lfloor n/3\rfloor}= X_1^{\lfloor n/3\rfloor } \big(W_{d_1}^{(3)}\big) + Z_1^{\lfloor n/3\rfloor }.
\end{equation}

Rx~$2$ retrieves  submessage $W_{d_1}^{(3)}$ from its cache and forms
\begin{IEEEeqnarray}{rCl}\label{eq:Y2}
\tilde{Y}_2^{\lfloor n/3\rfloor}&:= &Y_2^{\lfloor n/3\rfloor }- \alpha_2 X_1^{\lfloor n/3\rfloor } \big(W_{d_1}^{(3)}\big) \nonumber \\
 & =& X_2^{\lfloor n/3\rfloor } \big(W_{d_2}^{(6)} \bigoplus W_{d_3}^{(3)}\big) + Z_1^{\lfloor n/3\rfloor }.
\end{IEEEeqnarray}
It optimally decodes the x-or submessage $W_{d_2}^{(1)} \bigoplus  W_{d_3}^{(3)}$ based on $\tilde{Y}_2^{\lfloor n/3\rfloor}$. Finally, it retrieves also $W_{d_3}^{(3)}$ from its cache and x-ors it with its guess of $W_{d_2}^{(1)} \bigoplus  W_{d_3}^{(3)}$. 

Recall that Tx~$3$ is deactivated. 
Rx~$3$ optimally decodes the x-or submessage $W_{d_2}^{(6)} \bigoplus  W_{d_3}^{(3)}$ based on its  observations
\begin{equation}\label{eq:Y3}
Y_3^{\lfloor n/3\rfloor }= \alpha_3 X_2^{\lfloor n/3\rfloor } \Big(W_{d_2}^{(6)}\bigoplus W_{d_3}^{(3)}\Big) + Z_3^{\lfloor n/3\rfloor }.
\end{equation}
It then retrieves $W_{d_2}^{(6)}$ from its cache and x-ors it with its guess of the x-or submessage $W_{d_2}^{(6)}\bigoplus W_{d_3}^{(3)}$. 

The other subnets and other periods are treated analogously. 
This way, after periods~1--3, each Tx~$k\in\{2,\ldots, K-1\}$ has decoded three submessages of its desired $W_{d_k}$, see figures~\ref{fig:scheme3_period1}--\ref{fig:scheme3_period3}. It uses these three decoded submessages and the two submessages of $W_{d_k}$  stored in its cache to  reconstruct the missing sixth submessage and the original message $W_{d_k}$.
 
 For example, Tx~2 has guessed $\hat{W}_{d_2}^{(6)}$ in period~1, $\hat{W}_{d_2}^{(5)}$ in period~2, and $\hat{W}_{d_2}^{(1)}$ in period 3 and  has stored $W_{d_2}^{(3)}$ and $W_{d_2}^{(4)}$ in its cache memory. It forms
 \begin{equation*}
 \hat{W}_{d_2}^{(2)}= \hat{W}_{d_2}^{(1)} \bigoplus W_{d_2}^{(3)} \bigoplus W_{d_2}^{(4)} \bigoplus \hat{W}_{d_2}^{(5)} \bigoplus \hat{W}_{d_2}^{(6)}, 
 \end{equation*}
 and produces as its final guess:
 \[\hat{W}_{d_2}=\Big( \hat{W}_{d_2}^{(1)}, \hat{W}_{d_2}^{(2)},   W_{d_2}^{(3)}, W_{d_2}^{(4)}, \hat{W}_{d_2}^{(5)}\Big).\]

\textit{Analysis:} Reconsider the first subnet of period 1.
Each receiver~1--3 optimally decodes either submessage $W_{d_1}^{(3)}$ or the x-or submessage  $\big(W_{d_2}^{(1)} \bigoplus  W_{d_3}^{(3)}\big)$ based on  interference-free Gaussian outputs \eqref{eq:Y1}--\eqref{eq:Y3}. Since the rate of these submessages,
$3R/5=\frac{1}{2} \log(1+\alpha_{\min}^2P) -3\epsilon$, is below the capacity of the channels \eqref{eq:Y1}--\eqref{eq:Y3}, the probability of decoding error tends to 0 as $n\to \infty$.  
If these decodings are error-free, then each Rx~1--3 produces the correct estimates in period~1. 

Analogous conclusions hold for all subnets formed in periods~1--3. This proves that in the proposed scheme the probability of error at Rx~$2$ to Rx~$K-1$ tends to 0 as $n\to \infty$. 

%This way, one can show that  all submessages  in figures~\ref{fig:scheme3_period1}--\ref{fig:scheme3_period3} can be decoded with probability of error $\to 0$ as $n\to \infty$. 
%
%Rx~2 errs only if it produces the wrong guess of $W_{d_2}^{(1)} \bigoplus  W_{d_3}^{(3)}$. However, since this x-or submessage is of 
% rate $3 R/5=\frac{1}{2}\log(1+\alpha_{\min}^2P)- 3\epsilon $,  this happens with probability tending to 0 as $n\to\infty$. 

%
%Rx~3 errs only if it produces the wrong guess of $W_{d_2}^{(6)}\bigoplus W_{d_3}^{(3)}$. Since this x-or submessage  is of rate $5R/3=\frac{1}{2}\log(1+\alpha_{\min}^2P)-\epsilon$, this happens with probability $\to 0$ as $n\to\infty$. 

%We conclude that the probability of decoding error over this first subnet of period~1 tends to 0 as $n\to \infty$. 

If $K$ is not a multiple of $3$, the same does not apply to   Rx~$1$ and Rx~$K$ because they have decoded less  than three submessages. 
 We can eliminate this edge effect by round-robin. Apply the same trick as before: split each message  into $K-2$ message-parts and combine these $K-2$ message-parts into  2 new x-or parts so that  from any $K-2$ parts one can reconstruct the original message.  Now, time-share $K$ instances of above scheme over $K$ equally long super-periods, where in super-period~$\ell\in\{1,\ldots, K\}$,  we transmit the $\ell$-th part of each message and we relabel Tx~$k$ and Rx~$k$ as Tx~$k-\ell$ and Rx~$k-\ell$, where indices need to be taken modulo $K$.  This way each receiver is a bad receiver, i.e.,  Rx~1 or Rx~$K$,  only a fraction $\frac{K-2}{K}$ of the time, when it simply ignores its outputs.

The round-robin modification makes that the probability of decoding error now tends to 0 as $n\to \infty$ at all $K$ receivers. But it requires that the rate of transmission is reduced by a factor $\frac{K-2}{K}$. This rate reduction however vanishes as $K\to \infty$. % In each of the $K$ phases, 
%\begin{equation}
%R' = R\cdot \frac{K-2}{K}.
%\end{equation} 
%As $K\to\infty$, however $R'\to R$. 
%This proves achievability of the rate-memory pair in \eqref{eq:R2} and \eqref{eq:M2}, and thus of%the per-user rate-memory MG pair $(\mathcal{S}, \mu)$ in
%~\eqref{eq:tradeoff3}.

%By~\eqref{eq:R2} and \eqref{eq:M2}, we have proved achievability of \eqref{eq:tradeoff3}.

 \subsection{Direct part for Theorem~\ref{rate-memory-tradeoff_full}} \label{sec:direct2}

The scheme in \cite{lapidothlevyshamaiwigger-2014-1} achieves
\begin{equation}\label{eq:tradeoff1full}
\mathcal S_\Full= \frac{2}{3} \quad \textnormal{and} \quad \mu=0.
\end{equation}

We now describe a scheme that achieves \begin{equation}\label{eq:tradeoff3full}
\mathcal S_\Full= 2 \quad \textnormal{and} \quad \mu=D. 
\end{equation}
Fix a small $\epsilon>0$.  Let 
\begin{equation}\label{eq:R}
R=2 \cdot\Big(\frac{1}{2} \log(1+P-\epsilon)- \epsilon\Big),
\end{equation} 
and split each message $W_{d}$ into two parts $(W_{d}^{(1)}, W_d^{(2)})$ of equal rates $R/2$. Randomly draw for each $k\in\{1,\ldots, K\}$ a Gaussian codebook $\mathcal{C}_k:= \big\{X_k^n(1),\ldots, X_k^n(\lfloor 2^{nR/2}\rfloor )\big\}$ of length $n$, rate $R/2$ and power $P-\epsilon$.

\noindent \textit{Caching:}  Cache messages $W_{1}^{(1)}, \ldots, W_{D}^{(1)}$ at all odd receivers and messages $W_{1}^{(2)}, \ldots, W_{D}^{(2)}$ at all even receivers. 

\noindent \textit{Delivery to Users:} For $k\in\{1,\ldots,K\}$ odd, Tx~$k$ sends the codeword $X_k^n\big(W_{d_k}^{(2)}\big)$ from codebook $\mathcal{C}_k$ over the channel. 
 For $k\in\{1,\ldots, K\}$ even, Tx~$k$ sends the codeword $X_k^n\big(W_{d_k}^{(1)}\big)$  from codebook $\mathcal{C}_k$  over the channel.
 
 For $k$ odd, Rx~$k$ first takes messages $W_{d_{k-1}}^{(1)}$ and $W_{d_{k+1}}^{(1)}$ from its cache memory and forms %$\alpha$-times the codewords $X_{k-1}^n\big(W_{d_{k-1}}^{(1)}\big)$ and  $X_{k+1}^n\big(W_{d_{k+1}}^{(1)}\big)$ from its outputs $Y_{k}^n$ to form
 \begin{IEEEeqnarray}{rCl}\label{eq:c1}
 \hat{Y}_k^n&:= &Y_k^n - \alpha X_{k-1}^n\big(W_{d_{k-1}}^{(1)}\big)- \alpha X_{k+1}^n\big(W_{d_{k+1}}^{(1)}\big) \nonumber \\
 &= &X_k^n\big(W_{d_k}^{(2)}\big) +Z_k^n.
 \end{IEEEeqnarray}
 It optimally decodes $W_{d_k}^{(2)}$ from these interference-free outputs $\hat{Y}_k^n$.
 
 Similarly, for $k$ even, Rx~$k$ first takes messages $W_{k-1}^{(2)}$ and $W_{k+1}^{(2)}$ from its cache memory and %subtracts the weighted codewords $X_{k-1}^n\big(W_{d_k-1}^{(2)}\big)$ and  $X_{k+1}^n\big(W_{d_k+1}^{(2)}\big)$ from its outputs $Y_{k}^n$ to 
 forms
 \begin{IEEEeqnarray}{rCl}\label{eq:c2}
 \hat{Y}_k^n&:= &Y_k^n - \alpha X_{k-1}^n\big(W_{d_{k-1}}^{(2)}\big)- \alpha X_{k+1}^n\big(W_{d_{k+1}}^{(2)}\big) \nonumber \\
 &= &X_k^n\big(W_{d_k}^{(1)}\big) +Z_k^n.
 \end{IEEEeqnarray}
 It optimally decodes $W_{d_k}^{(1)}$ from the new outputs $\hat{Y}_k^n$. 
 
 The probability of decoding error tends to 0 as $n\to \infty$ because the rate of communication $R/2$, \eqref{eq:R}, lies below the capacity of the created interference-free channels \eqref{eq:c1} and \eqref{eq:c2}. Since our scheme requires cache memory~$D\cdot\big(\frac{1}{2}\log(1+P-\epsilon)-\epsilon\big)$, this proves achievability of \eqref{eq:tradeoff3full}. 
% In the following we show achievability of  
% per-user rate-memory MG pair

Time-sharing the schemes achieving  \eqref{eq:tradeoff1full}  and \eqref{eq:tradeoff3full} yields
\begin{equation}\label{eq:MG1}
 \mathcal{S}_\Full^*(\mu) \geq \frac{2}{3} + \frac{4}{3} \cdot \frac{\mu}{D}, \qquad \frac{\mu}{D}\in\big[0,\ 1\big).
\end{equation}

Finally, by \eqref{eq:tradeoff3full} and proposition~\ref{prop}, 
\begin{equation}\label{eq:MG2}
 \mathcal{S}_\Full^*(\mu) \geq 1+\frac{\mu}{D}, \qquad \frac{\mu}{D}\in[ 1, \ \infty).\end{equation}


\begin{thebibliography}{5}
  
  \bibitem{daiyu-2016} B. Dai and W. Yu, ``Energy efficiency of downlink transmission strategies for cloud radio access networks." Online: \texttt{Arxiv: 1601.01070}
  
  
\bibitem{lapidothlevyshamaiwigger-2014-1} A.~Lapidoth, N.~Levy, S.~Shamai (Shitz), and M.~Wigger, ``Cognitive Wyner networks with clustered decoding,"  \emph{IEEE Trans. Inform. Theory}, vol.~60, no.~10, pp.~6342-6367 Oct. 2014. 

\bibitem{wiggertimoshamai16-1} M. Wigger, R. Timo, and S. Shamai, ``Conferencing in Wyner's asymmetric interference network:
effect of number of rounds." Online: \texttt{ArXiv:1603.05540}.

\bibitem{ntranosmaddahalicaire-2015}
V. Ntranos, M. A. Maddah-Ali, and G. Caire, ``Cellular interference alignment," \emph{IEEE Trans. Inform. Theory,} vol. 61, no. 3, March 2015, pp. 1194--1217.



\bibitem{pengyanzhangwang-2015} M. Peng, S. Yan, K. Zhang and C. Wang, ``Fog computing based radio access networks: issues and challenges." Online: \texttt{ArXiv:1506.04233}.


\bibitem{parksimeoneshamai-2016} 
S.-H. Park, O. Simeone, and S. Shamai (Shitz), ``Joint optimization of cloud and edge
processing for fog radio access networks." Online: \texttt{ArXiv:1601.02460}.


\bibitem{timowigger-2015-1} S. {Saeedi Bidokhti}, R. Timo and M. Wigger, ``Noisy Broadcast Networks with Receiver Caching.'' Online: \texttt{{stanford.edu/~saeedi/jrnlcache.pdf}}.

\bibitem{hassanzadeherkipllorcatulino-2015} P. Hassanzadeh, E. Erkip, J. Llorca, and A. Tulino, ``Distortion-memory tradeoffs in Cache-aided wireless video delivery,''  in \emph{Proc. of Allerton Conference} Monticello (IL), USA, Oct.~2015.

\bibitem{ghorbelkobayashiyang-2015} A. Ghorbel, M. Kobayashi, and  S. Yang, ``Cache-enabled broadcast packet erasure channels with state feedback,'' in \emph{Proc. of Allerton Conference,} Monticello (IL), USA, Oct.~2015.

\bibitem{zhangelia-2015} J. Zhang and P. Elia,  ``Fundamental limits of cache-aided wireless BC: interplay of coded-caching and CSIT feedback.'' Online: \texttt{ArXiv:1511.03961}. 

\bibitem{ugurawansezgin-2015} Y. Ugur, Z. H. Awan and A. Sezgin, ``Cloud radio access networks with coded caching." Online: \texttt{ArXiv:1512.02385}.

\bibitem{azarisimeonespagnolinitulino-2016} B. Azari, O. Simeone, U. Spagnolini and A. Tulino, "Hypergraph-based analysis of clustered cooperative beamforming
with application to edge caching." To appear in \emph{IEEE Wireless Comm. Letters}.


\bibitem{maddahaliniesen-2015-1} M. A. Maddah-Ali and U. Niesen, ``Cache-aided interference channels,"
in \emph{Proc. IEEE ISIT,} HongKong China, June 2015. 


\bibitem{pooyaabolfazlhossein-2015} S. Pooya Shariatpanah, S. Abolfazl Motahari, B. Hossein Khalaj, ``Multi-server coded caching." Online: \texttt{ArXiv:1503.00265}. 

\bibitem{maddahaliniesen-2014-1} M. A. Maddah-Ali, U. Niesen, ``Fundamental limits of caching," in \emph{IEEE Trans. on Inf. Theory}, vol.~60, no.~5, pp.~2856--2867, May~2014. 


\bibitem{naderializadehmaddah-aliavestimehr-2016} N. Naderializadeh, M. A. Maddah-Ali, A. S. Avestimehr, ``Fundamental limits of cache-aided interference management." Online: \texttt{ArXiv:1602.04207}. 

%\bibitem{tandonsimeone-2016}  R. Tandon and O. Simeone, ``Cloud-aided wireless networks with edge caching: Fundamental latency trade-offs in fog
%radio access networks." submitted, Jan. 2016.


\end{thebibliography}
\end{document}